\newtheorem{theorem}{Theorem}
\newtheorem{remark}{Remark}
\newtheorem{lemma}{Lemma}
\newtheorem{proposition}{Proposition}
\begin{document}

\title{Exact Recovery of Sparse Signals Using Orthogonal Matching Pursuit: How Many Iterations Do We Need?}

\author{\IEEEauthorblockN{Jian Wang$^{\dag}$ and Byonghyo Shim$^{\ddag}$} \\
\IEEEauthorblockA{$^\dag$Department of Electrical \& Computer
Engineering, Duke University \\
$^\ddag$Department of Electrical \& Computer Engineering, Seoul National University \\
Email: {jian.wang@duke.edu, bshim@snu.ac.kr}} 
}

\maketitle

\begin{abstract}
Orthogonal matching pursuit (OMP) is a greedy algorithm widely used
for the recovery of sparse signals from compressed measurements. In
this paper, we analyze the number of iterations required for the OMP algorithm to perform exact recovery of sparse signals. Our analysis shows that OMP can accurately recover all $K$-sparse signals within $\lceil 2.8 K \rceil$ iterations when the measurement matrix satisfies a restricted isometry property (RIP).  Our result improves upon the recent result of Zhang and also bridges the gap between Zhang's result and the fundamental
limit of OMP at which exact recovery of $K$-sparse signals cannot be uniformly guaranteed.
\end{abstract}

\section{Introduction} \label{sec:I}
Recently, there has been a growing interest in recovering sparse
signals from compressed
measurements~\cite{donoho2006compressed,candes2006robust,candes2006near,needell2009cosamp,dai2009subspace,foucart2011hard,Soussen2013joint,
chen2013oracle}.
The main goal of this task is to accurately estimate a high
dimensional $K$-sparse vector $\mathbf{x} \in {\mathbb{R}^n}$
($\|\mathbf{x}\|_0 \leq K$) from a small number of linear
measurements $\mathbf{y} \in {\mathbb{R}^m}$ ($m \ll n$). The
relationship between the signal vector and the measurements is given
by
\begin{equation}
\mathbf{y} = \mathbf{\Phi x},
\end{equation}
where ${\mathbf{\Phi} \in {\mathbb{R}^{m \times n}} }$ is often
called the measurement (sensing) matrix.
%
%
A key finding in the sparse recovery problem is that one can recover
the original vector $\mathbf{x}$ with far fewer measurements than
traditional approaches use, as long as the signal to be recovered is
sparse and the measurement matrix roughly preserves the energy of
the signal of interest. Among many algorithms designed to recover
the sparse signal, orthogonal matching pursuit (OMP) algorithm has
received much attention for its competitive performance as well as
practical benefits, such as implementation simplicity and low
computational complexity~\cite{pati1993orthogonal,tropp2004greed}.
In essence, the OMP algorithm estimates the input sparse vector
$\mathbf{x}$ and its support (index set of nonzero elements) in an
iterative fashion, generating a series of locally optimal updates
fitting the measurement data. Specifically, at each iteration the
index of column that is mostly correlated with the modified
measurements (often called residual) is chosen as a new element of
the estimated support set. The vestiges of columns in the estimated
support are then eliminated from the measurements, yielding a new
residual for the next iteration. See Table~\ref{tab:OMP} for a detailed
description of the OMP algorithm.

%
%

\setlength{\arrayrulewidth}{1.6pt}
\begin{table}
  \centering
\caption{The OMP Algorithm} \label{tab:OMP}
\vspace{-2mm}
\begin{tabular}{@{}ll}
\hline \\ \vspace{-12pt} \\
\textbf{~~~Input}       &$\mathbf{\Phi}$, $\mathbf{y}$, and maximum iteration number $k_{\max}$.~ \\
\textbf{~~~Initialize}  & iteration counter $k = 0$, \\
                     & estimated support ${T}^{0} = \emptyset$, \\
                     & and residual vector $\mathbf{r}^{0} = \mathbf{y}$.
                     \\
\textbf{~~~While}       & $k < k_{\max}$, \textbf{do}\\
                     & $k = k + 1$. \\
                     & Identify \hspace{2.5mm}${t}^{k}  = \underset{i \in \Omega \backslash {T}^{k - 1}}{\arg \max} |\langle \phi_i, \mathbf{r}^{k - 1} \rangle|$. \\
                     & Enlarge \hspace{1.65mm}${T}^{k} = {T}^{k - 1} \cup t^{k}$. \\
                     & Estimate \hspace{0.85mm}$\hat{\mathbf{x}}^{k} = \underset{\mathbf{u}:\textit{supp}(\mathbf{u}) = {T}^k}{\arg \min} \|\mathbf{y}-\mathbf{\Phi} \mathbf{u}\|_2$. \\
                     & Update \hspace{2.99mm}$\mathbf{r}^{k} = \mathbf{y} - \mathbf{\Phi} \hat{\mathbf{x}}^{k}$. \\
\textbf{~~~End}         \\
\textbf{~~~Output}  &$T^k$ and $\hat{\mathbf{ x}}^k$. \\
\vspace{-3pt} \\
\hline
\end{tabular}
\end{table}
\setlength{\arrayrulewidth}{1.6pt}

Over the years, the OMP algorithm has long been considered as a heuristic algorithm hard to be analyzed. Recently, however, much research has been devoted to discovering the condition of OMP ensuring exact recovery of sparse signals.
In one direction, studies to identify the recovery condition using
probabilistic analyses have been proposed. Tropp and Gilbert
showed that when the measurement matrix $\mathbf{\Phi}$ is generated \text{i.i.d.} at random, and the measurement size is on the order of $K \log n$, OMP ensures the accurate recovery of every fixed $K$-sparse signal with overwhelming probability~\cite{tropp2007signal}.
%
%
Another line of work is to characterize exact recovery conditions
of OMP using properties of measurement matrices, such as the mutual
incoherence property (MIP)~\nocite{donoho2001uncertainty} and the restricted isometry property (RIP)~\cite{candes2005decoding}.
A measurement matrix $\mathbf{\Phi}$ is said to satisfy the RIP of
order $K$ if there exists a constant $\delta (\mathbf{\Phi}) \in (0, 1)$ such
that 
\begin{equation} \label{eq:RIP}
\left( {1 - \delta(\mathbf{\Phi})} \right)\| {\mathbf{x}} \|_2^2
\leq \| {{\mathbf{\Phi x}}} \|_2^2 \leq \left( {1 +
\delta(\mathbf{\Phi})} \right)\| {\mathbf{x}} \|_2^2
\end{equation}
for any $K$-sparse vector $\mathbf{x}$. In particular, the minimum
of all constants $\delta (\mathbf{\Phi})$ satisfying (\ref{eq:RIP})
is called the restricted isometry constant (RIC) and denoted by
$\delta_K (\mathbf{\Phi})$. In the sequel, we use $\delta_K$ instead of
$\delta_K (\mathbf{\Phi})$ for brevity.
In~\cite{davenport2010analysis}, Davenport and Wakin showed that OMP
ensures exact reconstruction of any $K$-sparse signal under
\begin{equation}
\delta _{K + 1} < \frac{1}{3\sqrt K }.
\end{equation}
Since then, many efforts have been made
to improve this condition~\cite{mo2012remarks,wang2012Recovery,wen2013improved,  
yang2014new,chang2014improved}.  
Recently, Mo has improved the condition to~\cite{mo2015sharp}
\begin{equation}
\delta _{K + 1}  <
\frac{1}{\sqrt{K + 1}}, \label{eq:monew}
\end{equation}
which is in fact sharp since there exist measurement matrices $\mathbf{\Phi}$ with $\delta_{K + 1} = \frac{1}{\sqrt{K + 1}}$, for which OMP fails to recover
the original $K$-sparse signal from its measurements in $K$ iterations~\cite{wen2013improved,mo2015sharp}. Therefore, in order to uniformly recover all $K$-sparse signals in $K$ iterations of OMP, the RIC should at least be inversely proportional to $\sqrt K$.

While aforementioned studies of OMP have focused on the scenario
where the number of iterations is limited to the sparsity $K$, there
have been recent works investigating the behavior of OMP when it
performs more than $K$ iterations~\cite{zhang2011sparse,livshits2012efficiency,foucart2013stability,livshitz2014sparse}
or when it chooses more than one index per iteration~\cite{liu2012orthogonal,huang2011recovery,wang2012Generalized}. Both in theoretical
performance guarantees and empirical simulations, these approaches
provide better results and also offers new insights into this
seemingly simple-minded yet clever algorithm.
Livshitz showed that with proper choices of $\alpha$ and $\beta$
($\alpha \sim 2 \cdot 10^{5}$ and $\beta \sim 10^{-6}$), OMP
accurately reconstructs $K$-sparse signals in $\left\lfloor \alpha
K^{1.2}\right\rfloor$ iterations under~\cite{livshits2012efficiency}
\begin{equation}
\delta_{\alpha K^{1.2}} =
\beta K^{-0.2}.
\end{equation}
Although the RIC decays slowly with $K$ (when compared to
that in the results of~\cite{davenport2010analysis,mo2012remarks,wang2012Recovery,wen2013improved,
yang2014new, chang2014improved,mo2015sharp}),
and thus offers significant benefits in the measurement size, it is not easy to
enjoy the benefits in practice, since it requires too many iterations.
Recently, it has been shown by Zhang that OMP recovers any $K$-sparse signal with $30K$ iterations under~\cite{zhang2011sparse}
\begin{equation}
 \delta_{31K} < \frac{1}{3}.
 \end{equation} The significance of this result is that when running $30K$ iterations, OMP can recover $K$-sparse signals accurately with the RIC being an absolute constant independent of $K$, exhibiting the reconstruction capability comparable to the state of the art sparse recovery algorithms (e.g., Basis Pursuit~\cite{chen2001atomic} and CoSaMP~\cite{needell2009cosamp}).
In the sequel, to distinguish the OMP algorithm running $\lceil cK \rceil$ ($c > 1$) iterations
from the conventional OMP algorithm running $K$ iterations, we denote it as
OMP$_{cK}$.

In this paper, we go further to investigate how many iterations of OMP$_{cK}$ would be enough to guarantee exact recovery of sparse signals, given that the RIC is an absolute constant. 
Note that running fewer number of iterations offers many computational benefits in practice. 
Our main result, described in Theorem~\ref{thm:general_3}, is that OMP$_{cK}$
accurately recovers all $K$-sparse signals $\mathbf{x}$ from the measurements
$\mathbf{y} = \mathbf{\Phi x}$ if $c$ satisfies a condition
expressed in terms of the RIC.
The main significance of this result is that as long as 
\begin{equation}
 c > 4 \log 2 \approx 2.8,
 \end{equation} there always exists an RIC, which is an absolute constant, such that the underlying condition is fulfilled. This means that the required number of iterations of OMP for exact recovery of sparse signals can be as few as $\lceil 2.8K \rceil$. 

It is well known that with overwhelming probability, random matrices (e.g., random Gaussian, Bernoulli, and partial Fourier matrices) satisfy the RIP when the number of measurements scales nearly linearly with the sparsity $K$~\cite{candes2005decoding,baraniuk2008simple}. In view of this, our result implies that if a $K$-sparse signal is measured by random matrices, it can be recovered with the nearly optimal number of measurements via the OMP algorithm running only $\lceil 2.8K \rceil$ iterations.

We summarize notations used throughout this paper. $\Omega = \{1,2,
\cdots, n\}$. $T = {supp}(\mathbf{x})= \{i |i \in \Omega, x_i\neq
0\}$ is the set of nonzero positions in $\mathbf{x}$. For $S
\subseteq \Omega$, $T \backslash S$
is the set of all elements contained in $T$ but not in $S$. $|S|$ is the cardinality of $S$. 
If $|S| \neq 0$, ${{\mathbf{x}}_S} \in \mathbb{R}^{|S|}$ is the restriction of the vector $\mathbf{x}$ to those elements indexed by $S$. Similarly,
${{\mathbf{\Phi }}_S} \in {\mathbb{R}^{m \times \left| S \right|}}$
is a submatrix of ${\mathbf{\Phi }}$ that only contains columns
indexed by $S$. If $\mathbf{\Phi}_S$
has full column rank, $\mathbf{\Phi}_S^\dagger =
(\mathbf{\Phi}'_S\mathbf{\Phi}_S)^{-1}\mathbf{\Phi}'_S$ is the
Moore-Penrose pseudoinverse of $\mathbf{\Phi}_S$ where $\mathbf{\Phi}_{S}'$ is the transpose of the matrix
$\mathbf{\Phi}_{S}$. $\mathcal{P}_{S}=\mathbf{\Phi}_{S}
\mathbf{\Phi}_{S}^\dagger$ is the projection onto
${span}(\mathbf{\Phi}_{S})$ (i.e., the span of columns in
$\mathbf{\Phi}_S$). In particular, if $S = \emptyset$,  $\mathbf{x}_\emptyset$ is a $0$-by-$1$ empty vector with $\ell_2$-norm $\|\mathbf{x}_{\emptyset}\|_2 = 0$, ${{\mathbf{\Phi}}_\emptyset}$ is an $m$-by-$0$ empty matrix, and ${{\mathbf{\Phi}}_\emptyset} \mathbf{x}_\emptyset$ is an $m$-by-$1$ zero matrix~\cite{{bernstein2009matrix}}.

%
%

%
%

\section{Exact Recovery of Sparse Signals via OMP}
\label{sec:cKOMP}
Suppose $\lceil cK \rceil$ ($c > 1$) is the number of
iterations ensuring selection of all support indices of the
$K$-sparse signal $\mathbf{x}$ (i.e., $T\subseteq T^{\lceil cK
\rceil}$).
Then the estimated support set $T^{\lceil cK \rceil}$ may contain
indices not in $T$. Even in this situation, the final result is
unaffected and the original signal $\mathbf{x}$ is recovered
accurately (i.e., $\hat{\mathbf{x}}^{\lceil cK \rceil}
= \mathbf{x}$) because 
\begin{eqnarray} \label{eq:LS}
 \hat{\mathbf{x}}^{\lceil cK \rceil} = \arg \min_{\mathbf{u}: {supp}(\mathbf{u}) = T^{\lceil cK \rceil}} {\|\mathbf{y}-\mathbf{\Phi}\mathbf{u}\|}_{2} \nonumber
\end{eqnarray}
and
\begin{eqnarray}
(\hat{\mathbf{x}}^{\lceil cK \rceil})_{T^{\lceil cK \rceil}} &=& \label{eq:bff1} \mathbf{\Phi}^\dag_{T^{\lceil cK \rceil}}  \mathbf{y} ~=~ \label{eq:bff1} \mathbf{\Phi}^\dag_{T^{\lceil cK \rceil}}  \mathbf{\Phi}_{T} \mathbf{x}_{T} \nonumber
 \\
 &\overset{(a)}{=}& \mathbf{\Phi}^\dag_{T^{\lceil cK \rceil}} \mathbf{\Phi}_{T^{\lceil cK \rceil}} \mathbf{x}_{T^{\lceil cK \rceil}} \nonumber
 \\
 &=& \label{eq:bff3} \mathbf{x}_{T^{\lceil cK \rceil}},
\end{eqnarray}
where (a) is
from the fact that $\mathbf{x}$ is supported on $T$ and hence
$\mathbf{x}_{T^{\lceil c K \rceil} \backslash T} = \mathbf{0}$.
This simple property allows us to investigate OMP running more than
$K$ iterations.
While running more iterations than the sparsity level would be
beneficial in obtaining better recovery bound, at the same time it
induces additional computational burden.
In fact, since the dimension of the matrix to be inverted increases
by one per iteration (see Table~\ref{tab:OMP}), both the operation cost
and running time increase cubically with the number of
iterations.
Therefore, it is of importance to investigate the lower bound for the number of
iterations ensuring accurate identification of the whole support
(i.e., the lower bound for $c$ that ensures $T\subseteq T^{\lceil cK
\rceil}$). 
Our result is described in the following theorem.
 
\vspace{1mm} 
%
%
\begin{theorem}\label{thm:general_1}
Let $\mathbf{x} \in \mathbb{R}^n$ be any $K$-sparse signal supported
on $T$ and $\mathbf{\Phi} \in \mathbb{R}^{m \times n}$ be the
measurement matrix. Furthermore, let $N^{k} := |T \backslash T^{k}|$ be
the number of remaining support indices after $k$ ($0 \leq k \leq
\lceil c K \rceil$) iterations of OMP$_{cK}$. Then if
$\mathbf{\Phi}$ obeys the RIP of order $s := |T \cup T^{k + \lfloor
c N^{k} \rfloor}|$ and
\begin{equation}
    c \geq - \frac{4 (1 + \delta_1)}{1 - \delta_{s}}
  \log\left(\frac{1}{2} - \frac{1}{2}\sqrt{\frac{\delta_{N^{k}} + \delta_{s}}{1 + \delta_{N^{k}}}}
  \right),
\label{eq:j2jiayou111}
\end{equation}
then OMP$_{cK}$ satisfies $T \subseteq T^{k + \lceil c N^{k} \rceil}$.
\end{theorem}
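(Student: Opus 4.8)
The plan is to track the progress of OMP$_{cK}$ over a block of $\lceil c N^k \rceil$ additional iterations starting from iteration $k$, and show that this block suffices to capture all of the remaining $N^k$ support indices. The natural progress measure is the energy of the residual restricted to the part of the signal not yet recovered; concretely, I would work with a quantity like $\|\mathbf{\Phi}_{T\setminus T^j}\mathbf{x}_{T\setminus T^j}\|_2^2$ or, more cleanly, the norm of the projection of $\mathbf{r}^j$ onto the span of the as-yet-unselected support columns. The first step is a \emph{per-iteration progress lemma}: if at iteration $j$ there are still support indices left (so $N^j\ge 1$), then selecting the index $t^{j+1}$ with largest correlation with $\mathbf{r}^j$ decreases this progress measure by a multiplicative factor bounded away from $1$, of the form $1-\frac{1-\delta_s}{4(1+\delta_1)N^j}$ or similar. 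This is where the constants $\delta_1$ and $1-\delta_s$ in \eqref{eq:j2jiayou111} enter: $1+\delta_1$ upper-bounds $\|\mathbf{\phi}_{t^{j+1}}\|_2^2$, while $1-\delta_s$ controls how much of the residual energy must be ``visible'' in the correlations, since the relevant columns all live in an $s$-dimensional RIP-regular subspace with $s=|T\cup T^{k+\lfloor cN^k\rfloor}|$. The key elementary fact is that the maximum squared correlation is at least the average over the $N^j$ remaining support indices, which after an RIP estimate yields the claimed geometric decay.

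The second step is to \emph{iterate the progress bound and convert it to a logarithmic iteration count}. After $\ell$ iterations of geometric shrinkage by factor $(1-\rho)$ with $\rho=\frac{1-\delta_s}{4(1+\delta_1)N^k}$, the progress measure is at most $e^{-\rho\ell}$ times its initial value; I would bound the initial value using the upper RIP inequality on the $N^k$-sparse vector $\mathbf{x}_{T\setminus T^k}$ (giving the $1+\delta_{N^k}$ term) and argue that once the progress measure drops below a threshold determined by the smallest nonzero entry of $\mathbf{x}$ scaled by $1-\delta_s$ (or, more slickly, once it is strictly smaller than the contribution any single remaining column could make), a new support index must be selected — actually the cleaner route is to show the residual-restricted energy must hit exactly $0$, forcing $N^j=0$. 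Choosing $\ell=\lceil cN^k\rceil$ and imposing that $e^{-\rho\lceil cN^k\rceil}\cdot(1+\delta_{N^k})$ is below $\frac{1}{4}\bigl(1-\sqrt{(\delta_{N^k}+\delta_s)/(1+\delta_{N^k})}\bigr)^2$ (the square coming from comparing norms to squared norms) and solving for $c$ reproduces \eqref{eq:j2jiayou111} after taking logs; the factor $N^k$ cancels against the $N^k$ in $\rho$, which is exactly why $c$ ends up independent of $N^k$.

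One subtlety I would handle carefully is that OMP may pick \emph{wrong} (non-support) indices during the block, so $N^j$ need not decrease at every step; the progress-measure argument must be robust to this. The fix is standard: selecting any index — right or wrong — never increases the residual energy (the LS step is a projection onto a growing subspace), and the per-iteration multiplicative decrease holds as long as at least one support index remains, regardless of whether the \emph{selected} index is one of them. A second point is that $N^j\le N^k$ throughout the block, so $\rho$ computed with $N^k$ is a valid (conservative) lower bound on the per-step rate at every iteration; this monotonicity is what lets a single $c$ work uniformly.

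I expect the \textbf{main obstacle} to be the per-iteration progress lemma with the \emph{sharp} constant — specifically, lower-bounding the largest correlation $|\langle\mathbf{r}^j,\mathbf{\phi}_{t^{j+1}}\rangle|$ in terms of the current restricted residual energy. Getting a clean $\frac{1}{N^j}$-type averaging bound is easy, but squeezing out the right dependence on $1-\delta_s$ versus $1+\delta_1$ (rather than a cruder $\frac{1\pm\delta_s}{\cdots}$) requires a careful decomposition of $\mathbf{r}^j$ into its components along $\mathbf{\Phi}_{T\setminus T^j}$ and orthogonal to $\mathrm{span}(\mathbf{\Phi}_{T^j})$, plus an RIP bound on the Gram matrix of the union support of size $s$. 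The logarithmic accounting in the second step and the wrong-index robustness in the third are comparatively routine once the progress lemma is in hand.
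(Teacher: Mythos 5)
Your per-iteration progress lemma is fine (it is essentially the paper's Proposition~2 specialized to the full remaining support, and the constants $1+\delta_1$ and $1-\delta_s$ enter exactly as you say), and your robustness remarks about wrong selections and non-increasing residual power are also correct. The gap is in the second step, where you convert geometric decay into ``all remaining indices are selected.'' As you yourself observe, with $\rho=\frac{1-\delta_s}{4(1+\delta_1)N^k}$ and $\ell=\lceil cN^k\rceil$ the factor $N^k$ cancels, so after the whole block the residual energy has dropped only by a \emph{constant} factor $e^{-c(1-\delta_s)/(4(1+\delta_1))}$ times the initial bound $(1+\delta_{N^k})\|\mathbf{x}_{T\setminus T^k}\|_2^2$. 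A constant-factor decrease cannot force the residual to hit zero, nor can it push the energy below a threshold tied to the smallest nonzero entry $x_{\min}$: if the remaining entries have large dynamic range, $x_{\min}^2$ can be arbitrarily small relative to $\|\mathbf{x}_{T\setminus T^k}\|_2^2$, so your stopping criterion would require a number of iterations growing like $N^k\log\bigl(\|\mathbf{x}_{T\setminus T^k}\|_2^2/x_{\min}^2\bigr)$, i.e.\ depending on the signal's dynamic range, whereas the theorem asserts a $c$ depending only on the RIP constants. The inequality you propose to impose on $e^{-\rho\lceil cN^k\rceil}(1+\delta_{N^k})$ therefore does not imply $T\subseteq T^{k+\lceil cN^k\rceil}$ for general $K$-sparse signals; it only works when all remaining entries have comparable magnitude.

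The missing idea in the paper is precisely the device that removes the dynamic-range dependence: order the remaining entries by magnitude, partition $\Gamma^k=T\setminus T^k$ into dyadic head sets $\Gamma^k_\tau$ of sizes $2^\tau-1$, and define $L$ as the first scale at which the tail energy stops decaying by a factor $\sigma$. Then one proves the weaker intermediate claim that within $\lceil c\,2^{L-1}\rceil-1$ iterations the residual energy falls strictly below $\|\mathbf{x}_{\Gamma^k\setminus\Gamma^k_{L-1}}\|_2^2$ (comparing the upper bound $B_u$ with a lower bound $B_\ell$ obtained by scheduling the iterations in sub-blocks of length $\lceil\frac{c}{4}|\Gamma^k_\tau|\rceil$ and telescoping your Proposition-2-type inequality with the level $\|\mathbf{\Phi}_{\Gamma^k\setminus\Gamma^k_\tau}\mathbf{x}_{\Gamma^k\setminus\Gamma^k_\tau}\|_2^2$ subtracted, not the raw residual energy), which forces at least the $2^{L-1}$ largest remaining entries to have been captured. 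The remaining indices are then handled by \emph{induction} on $N^k$, with the budgets matching because $\lceil c\,2^{L-1}\rceil-1+\lceil c(N^k-2^{L-1})\rceil\leq\lceil cN^k\rceil$. Your single-block argument has no analogue of this magnitude-adaptive decomposition or of the induction, and without them the approach fails exactly in the regime of highly unbalanced nonzero entries; this is not a routine detail but the heart of why a uniform $c$ (and hence the $2.8K$ claim) is attainable at all.
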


The proof of Theorem~\ref{thm:general_1} will be given in Section
\ref{sec:proof}. The key point of Theorem \ref {thm:general_1} is
that after performing $k$ ($0 \leq k \leq
\lceil c K \rceil$) iterations, OMP$_{cK}$ selects the remaining $N^k$
support indices within $\lceil c N^{k} \rceil$ additional iterations,
as long as the condition in~\eqref{eq:j2jiayou111} is fulfilled. In
particular, when $k = 0$, $N^k = |T \backslash T^k| = |T \backslash
\emptyset| = K$ and $T \subseteq T^{\lceil c K \rceil}$ holds true
under
\begin{equation}
    c \geq \hspace{-.5mm} - \frac{4 (1 + \delta_1)}{1 - \delta_{|T \cup T^{\lfloor cK \rfloor}|}}
  \log\left(\hspace{-.5mm} \frac{1}{2} \hspace{-.5mm} - \hspace{-.5mm} \frac{1}{2}\sqrt{\frac{\delta_{K} \hspace{-.5mm} + \delta_{|T \cup T^{\lfloor cK \rfloor}|}}{1 + \delta_{K}}}
  \right)\hspace{-1mm}.
\label{eq:j2jiayou1116181}
\end{equation}
Further, from monotonicity of the RIC (i.e., $\delta_{K_1} \leq \delta_{K_2}$ for $K_1 \leq K_2$), one can easily show that \eqref{eq:j2jiayou1116181} is guaranteed by 
\begin{eqnarray}
    c \hspace{-.25mm}  \geq \hspace{-.5mm} - \hspace{-.5mm} \frac{4 (1 \hspace{-.25mm} + \hspace{-.25mm} \delta)}{1 - \delta}
  \log \hspace{-.25mm} \left(\hspace{-.5mm} \frac{1}{2} \hspace{-.5mm} - \hspace{-.5mm} \sqrt{\frac{\delta}{2 \hspace{-.25mm} + \hspace{-.25mm} 2 \delta}}
  \right)~\hspace{-1mm}\text{where}~\delta \hspace{-.5mm}:= \delta_{\lfloor (c + 1)K \rfloor}.\hspace{-.75mm}
\label{eq:j2jiayou11}
\end{eqnarray} 
Hence, we obtain a simpler version of Theorem \ref{thm:general_1}.
 
\vspace{1mm} 
\begin{theorem}\label{thm:general_3}
Let $\mathbf{x} \in \mathbb{R}^n$ be any $K$-sparse signal and let $\mathbf{\Phi} \in \mathbb{R}^{m \times n}$ be the measurement matrix satisfying the RIP of order $\lfloor (c + 1)K \rfloor$. Then if $c$ satisfies \eqref{eq:j2jiayou11}, OMP$_{cK}$ perfectly recovers the signal $\mathbf{x}$ from the measurements $\mathbf{y} = \mathbf{\Phi x}$.  
\end{theorem} 
\vspace{1mm}

\begin{figure}[t]
\begin{center}
\hspace{-1.5mm}\includegraphics[scale = .5]{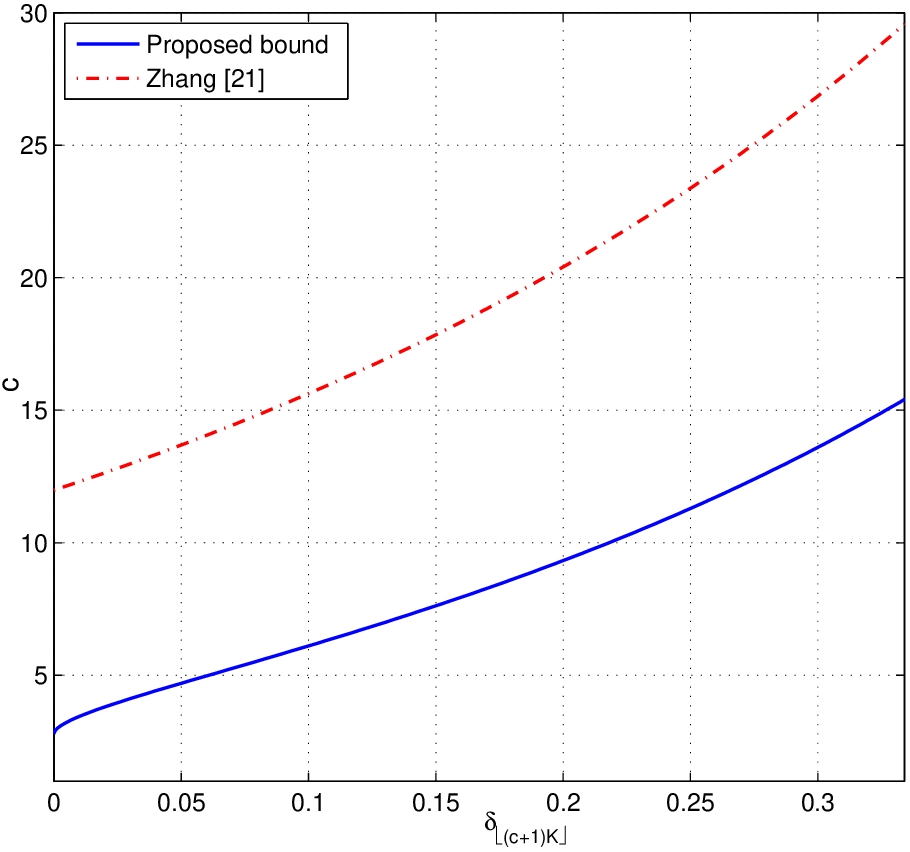} 
\caption{Comparison between the proposed bound and Zhang's result
\cite{zhang2011sparse}.}\label{fig:comparison}\vspace{-3mm}
\end{center}
\end{figure}

In Fig.~\ref{fig:comparison}, we compare the lower bound of $c$ in
\eqref{eq:j2jiayou11} with the result of Zhang~\cite[Theorem
2.1]{zhang2011sparse}.
In both cases, one can observe that the lower bound of $c$ increases with the RIC monotonically. In particular, the lower bound $c$ of this work is
uniformly smaller than that of~\cite{zhang2011sparse} for the whole range of RIC. 
For example, it requires $30K$ iterations to recover $K$-sparse
signals with $\delta_{31K} = \frac{1}{3}$ in~\cite{zhang2011sparse}. Whereas, it requires only $\lceil 15.4K \rceil$ iterations in our new
result.\footnote{From Fig.~\ref{fig:comparison}, the
RIP condition associated with $\lceil 15.4K \rceil$ iterations is
$\delta_{\lfloor 16.4 K \rfloor} \leq \frac{1}{3}$ in our result, which is less restrictive than $\delta_{31K} < \frac{1}{3}$~\cite{zhang2011sparse}.}

Another interesting point we observe from Fig.~\ref{fig:comparison} is the difference in the critical
value of $c$ such that $\delta_{\lfloor (s + 1)K \rfloor} = 0$.
In~\cite{zhang2011sparse}, the lower bound of $c$ ensuring $\delta_{\lfloor (s + 1)K \rfloor} > 0$ is 
\begin{equation}
c > 4 \log 20 \approx 12,
\end{equation}
while that in our work is
\begin{equation}
c > 4 \log 2 \approx 2.8.
\end{equation}
This means that if OMP$_{cK}$ runs at least $\lceil 2.8K \rceil$ iterations,
there always exists an absolute constant $\delta_{\lfloor (c + 1)K
\rfloor} \in (0, 1)$ satisfying \eqref{eq:j2jiayou11}, and under this condition OMP performs the exact recovery of all $K$-sparse signals. In fact, by applying $c = 2.8$ to \eqref{eq:j2jiayou11}, we obtain the upper bound of the RIC as 
\begin{equation}
\delta_{\lceil 3.8K \rceil} \leq 2 \cdot 10^{-5}.
\end{equation}


\begin{remark}[RIP condition] \label{rem:1h}
It is worth mentioning that the value of $c$ and the upper bound of the RIC are closely related. In fact, we can obtain better (larger) RIC bounds by using larger values of $c$. 
For example, when using $c = 30$ in Theorem~\ref{thm:general_3}, we can obtain the upper bound of the RIC as
\begin{equation}
\delta_{31K} \leq \frac{1}{2},
\end{equation} which is better (less restrictive) than the result $\delta_{31K} < \frac{1}{3}$~\cite{zhang2011sparse}.
\end{remark}
\vspace{1mm}

\begin{remark}[Why smaller $c$?] \label{rem:smallc}
Running fewer number of iterations of OMP$_{cK}$ offers many benefits. Noting that the number of indices chosen by OMP$_{cK}$ should not exceed the number of measurements (i.e., $\lceil cK \rceil \leq m$),\footnote{Otherwise the signal estimation step (i.e., the least squares (LS) projection) in the OMP algorithm cannot be performed.} a smaller value of $c$ directly leads to a wider range of sparsity $K$ when $m$ is fixed. On the other hand, when $K$ is fixed, running fewer number of iterations is also beneficial since larger $c$ may require larger number of measurements. Furthermore, identifying a small value of $c$ is of importance for the noisy case because running too many iterations will degrade the denoising performance of the algorithm~\cite{ding2013perturbation}. 

\end{remark}

\vspace{1mm} 
\begin{remark}[Fundamental limit of $c$] \label{rem:limit}
While Theorem \ref{thm:general_3} demonstrates that OMP$_{cK}$ can uniformly recover all $K$-sparse signals using at least $\lceil 2.8K \rceil$ iterations, it should be noted that the number cannot be smaller or equal to $K$, given that the RIC is an absolute constant. In fact, to ensure exact recovery with the conventional OMP algorithm (i.e., OMP$_{cK}$ with $c = 1$), the RIC should be at least inversely proportional to $\sqrt K$~\cite{wang2012Recovery,mo2012remarks}, which therefore places a fundamental limit to the recovery performance of OMP$_{cK}$. Our result bridges the gap between the result of Zhang~\cite{zhang2011sparse} and this fundamental limit.
\end{remark}

%
%

\vspace{1mm}

\begin{remark}[Measurement size]
It is well known that many random measurement matrices satisfy the
RIP with overwhelming probability when the number of measurements
scales linearly with the sparsity. For example, a random matrix
$\mathbf{\Phi} \in \mathbb{R}^{m \times n}$ with entries drawn
i.i.d. from Gaussian distribution $\mathcal{N}(0, \frac{1}{m})$
obeys the RIP with $\delta_K = \varepsilon \in (0, 1)$ with overwhelming
probability if $m \geq
\frac{ C K \log \frac{n}{K}}{\varepsilon^2}$ for some constant $C >
0$~\cite{candes2005decoding,baraniuk2008simple}. Therefore, our result implies that with high probability, OMP$_{cK}$
can recover $K$-sparse signals in $\lceil 2.8 K \rceil$ iterations
when the number of random Gaussian  measurements is on the order of
$K \log \frac{n}{K}$. This is essentially an encouraging result
since by running slightly more than the sparsity level $K$, OMP$_{cK}$ can
accurately reconstruct all $K$-sparse signals with the same order of
measurements as required by the state of the art sparse recovery
algorithms (e.g., Basis Pursuit~\cite{chen2001atomic} and
CoSaMP~\cite{needell2009cosamp}).
This is in contrast to the conventional OMP algorithm, for which the RIC should be at least inversely proportional to $\sqrt K$ in order to ensure exact recovery of all $K$-sparse signals~\cite{wang2012Recovery,mo2012remarks}, and hence the required number of measurements should be on the order of $K^2 \log \frac{n}{K}$.

\end{remark}

\vspace{1mm} 

\begin{remark}[Comparison with \cite{livshitz2014sparse}] \label{rem:compare_20}
Our result is closely related to the recent work of Livshitz and Temlyakov~\cite{livshitz2014sparse}. In their work, authors considered random sparse signals and showed that with high probability, these signals can be recovered with $\lceil (1 + \epsilon)K \rceil$ iterations of OMP$_{cK}$~\cite{livshitz2014sparse}. Our result has two key distinctions over this work. Firstly, while each nonzero component of sparse signals are upper bounded in~\cite{livshitz2014sparse}, our result does not impose any constraint on the nonzero components of input signals. Secondly, and more importantly, the analysis in~\cite{livshitz2014sparse} relies on the assumption that $K \geq \delta_{2K}^{-1/2}$, which essentially applies to the situation where the sparsity $K$ is nontrivial. In contrast, our analysis works for input signals with arbitrary sparsity levels. 
 
\end{remark}

\begin{figure}[t]
\centering \subfigure[Set diagram of $T$, $T^{k}$, and
$\Gamma^{k}$.]
{\includegraphics[scale = .875] {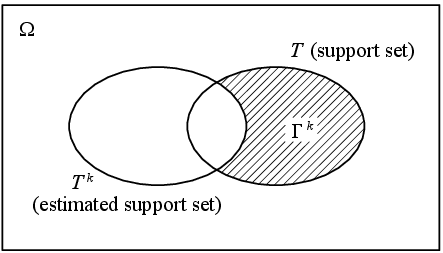}
\label{fig:subfig1}} \subfigure[Illustration of indices in
$\Gamma^{k}$.]
{\includegraphics[scale = .875] {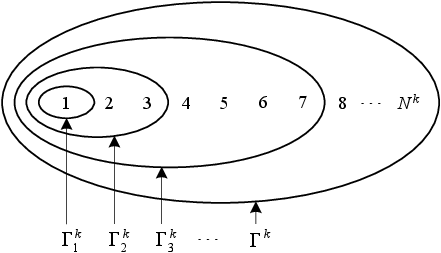}
\label{fig:subfig2}} \caption{Illustration of sets $T$, $T^{k}$, and
$\Gamma^{k}$.} \label{fig:set}\vspace{-2mm}
\end{figure}

\section{Proof of Theorem \ref{thm:general_1}} \label{sec:proof}

\subsection{Preliminaries} \label{sec:notations}

For notational simplicity, we denote $\Gamma^k = T \backslash T^{k}$
so that $N^k = |\Gamma^k|$. Also, without loss of generality, we
assume that $\Gamma^{k} = \{1, \cdots, N^{k}\}$ and that nonzero
elements in $\mathbf{x}_{\Gamma^k}$ are arranged in descending order
of their magnitudes (i.e., $|x_1| \geq \cdots \geq \left|
x_{N^k} \right|$). Now, we define the subset of
${\Gamma^{k}}$ as
\begin{equation}
{\Gamma}^k_{\tau} =
\begin{cases}
\emptyset                          & \tau = 0, \\
\{1, 2, \cdots, 2^{\tau} - 1\} & \tau = 1, 2, \cdots, \lfloor \log_2 {N^k} \rfloor, \\
\Gamma^{k}                         & \tau = \lfloor \log_2 {N^k}
\rfloor + 1.
\end{cases} \label{eq:jjjjffff}
\end{equation}
See Fig.~\ref{fig:set} for the illustration of
${\Gamma}^k_{\tau}$. Notice that the last set ${\Gamma}^k_{\lfloor
\log_2 {N^k} \rfloor + 1}$ $(= \Gamma^k)$ may have less than
$2^{\lfloor \log_2 {N^k} \rfloor + 1} - 1$ elements. For example, if
$\Gamma^{k} = \{1,2, \cdots, 10\}$, then ${\Gamma}^k_{0} =
\emptyset$, ${\Gamma}^k_1 = \{1\}$, ${\Gamma}^k_2 = \{1,2,3\}$,
${\Gamma}^k_3 = \{1, \cdots, 7\}$, and ${\Gamma}^k_4 = \{1,2,
\cdots, 10\}$ has less than $2^4 - 1$ $( = 15)$ elements.

For given set $\Gamma^{k}$ and constant $\sigma > 1$, let $L \in
\{1, \cdots, \lfloor \log_2 {N^k} \rfloor + 1\}$ be the minimum
integer satisfying
\begin{subequations}
\begin{align}
 \|\mathbf{x}_{{\Gamma^{k}} \backslash {\Gamma}^k_{0}}\|_2^2 &< \sigma
 \|\mathbf{x}_{{\Gamma^{k}} \backslash {\Gamma}^k_{1}}\|_2^2,
 \label{eq:mu1}\\
 \|\mathbf{x}_{{\Gamma^{k}} \backslash {\Gamma}^k_{1}} \|_2^2 &< \sigma
 \|\mathbf{x}_{{\Gamma^{k}} \backslash {\Gamma}^k_{2}}\|_2^2,
 \label{eq:mu2} \\
 &~~\vdots \nonumber \\
 \|\mathbf{x}_{{\Gamma^{k}} \backslash {\Gamma}^k_{{L}- 2}}\|_2^2 &< \sigma
 \|\mathbf{x}_{{\Gamma^{k}} \backslash {\Gamma}^k_{{L}- 1}}\|_2^2,
 \label{eq:mu3} \\
 \|\mathbf{x}_{{\Gamma^{k}} \backslash {\Gamma}^k_{{L}- 1}}\|_2^2 &\geq \sigma
 \|\mathbf{x}_{{\Gamma^{k}} \backslash {\Gamma}^k_{L}}\|_2^2.
 \label{eq:mu4}
\end{align}
\end{subequations}
Moreover, if \eqref{eq:mu4} holds true for all $L \geq 1$, then we
simply take $L = 1$ and ignore \eqref{eq:mu1}--\eqref{eq:mu3}. We
remark that $L$ always exists because $\|\mathbf{x}_{\Gamma^k
\backslash \Gamma^k_{\lfloor \log_2 {N^k} \rfloor + 1}}\|_2^2 = 0$
so that \eqref{eq:mu4} holds true at least for $L = \lfloor \log_2
{N^k} \rfloor + 1$. 


\vspace{-1mm}

\subsection{Main Idea}
The proof of Theorem \ref{thm:general_1} is based on
mathematical induction in $N^k$, the number of remaining support indices after $k$ iterations of OMP$_{cK}$. First, we check the case where $N^k
= 0$. This case is trivial since it implies that all support indices
have already been selected ($T \subseteq T^k$) so that no more iteration is needed. Then, we assume that the argument holds up to an
integer $\gamma - 1$ ($\gamma \geq 1$). In other words, we
assume that whenever $ N^k \leq \gamma - 1$, it requires at most $\lceil
c N^k \rceil$ additional iterations to select all remaining indices
in $T$. Under this inductive assumption, we will show that if $N^k
= \gamma$, it also requires at most $\lceil c \gamma \rceil$ additional
iterations to select the remaining $\gamma$ indices in $T$ (i.e., $T
\subseteq T^{k + \lceil c \gamma \rceil}$).

We now proceed to the proof of the induction step ($N^k \hspace{-.25mm} = \hspace{-.25mm} \gamma$).

\begin{itemize}
\item First, we show that after a specified number of additional iterations, a substantial amount of indices in $\Gamma^{k}$ can 
be chosen, and the number of remaining support indices is upper bounded. 
More precisely, we show that OMP$_{cK}$ chooses at least $2^{L - 1}$ support
indices of $\Gamma^k$ in
\begin{equation} \label{eq:jjff66}
 k' := \lceil c 2^{L - 1}\rceil - 1 
\end{equation}
additional iterations (where $L$ is defined in \eqref{eq:mu1}--\eqref{eq:mu4}) so that the number of remaining
support indices (after $k + k'$ iterations) satisfies
\begin{equation} \label{eq:jjff}
 N^{k + k'} \leq \gamma - 2^{{L}- 1}.
\end{equation}
\item Second, since (\ref{eq:jjff}) directly implies
$N^{k + k'} \leq \gamma - 1$, by induction hypothesis it
requires at most $\lceil c N^{k + k'} \rceil$ additional
iterations to choose the rest of support indices. In summary, the total number of
iterations of OMP$_{cK}$ to choose all support indices is no more than 
\begin{eqnarray}
 k \hspace{-.5mm} + \hspace{-.5mm} k' \hspace{-.5mm} + \hspace{-.5mm} \lceil c N^{k + k'} \rceil \hspace{-2mm}
 &\overset{(a)}{\leq}& \hspace{-2mm} k + k' + \lceil c ({\gamma} - 2^{{L}- 1}) \rceil \label{eq:jjff200} \nonumber \\
 \hspace{-2mm} &\overset{(b)}{=}& \hspace{-2mm} k \hspace{-.5mm} + \hspace{-.5mm} (\lceil c 2^{L - 1}\rceil \hspace{-.5mm} - \hspace{-.5mm} 1) \hspace{-.5mm} + \hspace{-.5mm} \lceil c ({\gamma} \hspace{-.5mm} - \hspace{-.5mm}
2^{{L}- 1}) \rceil \label{eq:dengdaiyiii4444} \nonumber \\
 \hspace{-2mm} &\overset{(c)}{\leq}& \hspace{-2mm} k + \lceil c \gamma \rceil, \label{eq:dengdaiyiii}
\end{eqnarray}
where (a) is from \eqref{eq:jjff}, (b) follows from \eqref{eq:jjff66}, and (c) holds true because $\lceil a \rceil + \lceil
b \rceil - 1 \leq \lceil a + b \rceil$. 

\end{itemize}

Therefore, we can conclude that all support
indices are chosen in $k + \lceil
c \gamma \rceil$ iterations of OMP$_{cK}$ ($T \subseteq T^{k + \lceil c \gamma
\rceil}$), which establishes the induction step.
An illustration of the induction step is given in
Fig.~\ref{fig:idea}. 
Now, what remains is the proof of~\eqref{eq:jjff}.

%

\vspace{-1mm}

\subsection{Sketch of Proof for (\ref{eq:jjff})} \label{sec:proof_of_13}
Before we proceed, we explain the key idea to prove this claim. Instead of proving \eqref{eq:jjff} directly, we show that a sufficient condition of \eqref{eq:jjff} is true. That is,
\begin{equation}
\| \mathbf{x}_{\Gamma^{k + k'}} \|_2^2 < \|\mathbf{x}_{\Gamma^{k}
\backslash {\Gamma}^k_{{L}- 1}} \|_2^2. \label{eq:32hh}
\end{equation}

We first explain why \eqref{eq:32hh} is a sufficient condition of~\eqref{eq:jjff}. 
Consider $\mathbf{x}_{\Gamma^{k + k'}}$ and
$\mathbf{x}_{\Gamma^{k} \backslash {\Gamma}^k_{{L}- 1}}$, which are
two truncated vectors of $\mathbf{x}_{\Gamma^{k}}$.
 From the definition of $\Gamma^k_{\tau}$
in \eqref{eq:jjjjffff}, we have ${{\Gamma}^k_{L - 1}} = {\{1, 2,
\cdots, 2^{L - 1} - 1\}}$ and $ |\Gamma^{k} \backslash
{\Gamma}^k_{{L}- 1}| = |\{2^{L - 1}, \cdots, \gamma\}| = \gamma -
2^{L - 1} + 1$. Thus, we can obtain an alternative form of \eqref{eq:jjff} as
$
N^{k + k'} \leq |\Gamma^k \backslash \Gamma^k_{L - 1}| - 1.$
Equivalently,
\begin{equation} \label{eq:jjffabc}
N^{k + k'} < |\Gamma^k \backslash \Gamma^k_{L - 1}|.
\end{equation}
Further, since $|x_i|$, $i = 1, 2, \cdots, \gamma$, are
arranged in descending order of their magnitudes (i.e., $ |x_1| \geq \cdots \geq |x_{\gamma}|$),
$\mathbf{x}_{\Gamma^{k} \backslash {\Gamma}^k_{{L}- 1}} = (x_{2^{L -
1}}, \cdots, x_{\gamma})'$ consists of ${\gamma} - 2^{{L}- 1} + 1$
smallest elements (in magnitude) of $\mathbf{x}_{\Gamma^{k}}$.
Then it is easy to see that \eqref{eq:32hh}
ensures \eqref{eq:jjffabc}, and thus \eqref{eq:32hh} becomes a sufficient condition of~\eqref{eq:jjff}.

\begin{figure}[t]
\begin{center}
\hspace{-1.5mm} \includegraphics[scale = .63]{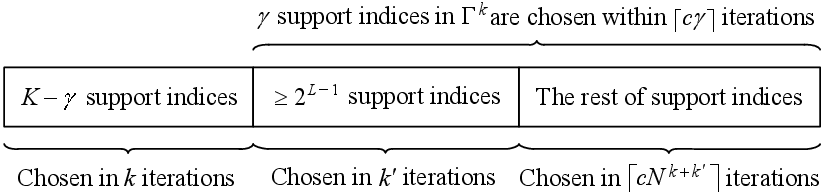}
\caption{Illustration of the induction step when $N^k =
\gamma$.}\label{fig:idea} \vspace{-5mm}
\end{center}
\end{figure}


%
%
%

Now, what remains is the proof of \eqref{eq:32hh}. To the end, we build an upper bound for $\| \mathbf{x}_{\Gamma^{k +
k'}} \|_2^2$ and a lower bound for $\|
\mathbf{x}_{\Gamma^{k} \backslash {\Gamma}^k_{{L}- 1}} \|_2^2$, and then relate them to get a condition for~\eqref{eq:32hh}.
 
\vspace{1mm}
 
\begin{proposition} \label{prop:bound}
We have
\begin{eqnarray}
\| \mathbf{x}_{\Gamma^{k +
k'}} \|_2^2 &\leq& \frac{\| \mathbf{r}^{k + k'} \|_2^2}{1 - \delta_{|T \cup
T^{k + \lfloor c \gamma \rfloor}|}}, \\
\|\mathbf{x}_{ \Gamma^{k} \backslash {\Gamma}^k_{{L}- 1}} \|_2^2 &>& \frac{\sigma(1 - \sigma  \eta) \| \mathbf{r}^{k + k'} \|_2^2} {(1 + \delta_{\gamma}) (1 - \eta)},
\end{eqnarray} 
where $\eta := \exp \Big(- \frac{c (1 - \delta_{|T \cup
T^{k + \lfloor c \gamma \rfloor}|})}{4 (1 + \delta_1)}
 \Big)$.
\end{proposition}

\vspace{1mm}
The proof is left to Appendix \ref{app:bound}.
%
From Proposition~\ref{prop:bound}, it is clear that \eqref{eq:32hh} holds true whenever  
\begin{equation} \label{eq:27hh}
\frac{1}{1 - \delta_{|T \cup
T^{k + \lfloor c \gamma \rfloor}|}} \leq \frac{\sigma(1 - \sigma  \eta) } {(1 + \delta_{\gamma}) (1 - \eta)}.
\end{equation}
%
Noting that $ \sigma (1 - \sigma \eta) =
\frac{1}{\eta} ( {-(\sigma \eta - \frac{1}{2})^2 + \frac{1}{4}})$, by choosing $\sigma \eta = \frac{1}{2}$ we have $\sigma (1 -
\sigma \eta) = \frac{1}{4 \eta}$, and hence \eqref{eq:27hh} becomes
\begin{equation}
4  \eta (1 - \eta) (1 + \delta_{\gamma}) \leq 1 - \delta_{|T \cup T^{k + \lfloor c \gamma \rfloor}|}.
\end{equation}
Equivalently,
\begin{equation}
 c \hspace{-.5mm} \geq \hspace{-.5mm} - \frac{4 (1 \hspace{-.5mm} + \hspace{-.5mm} \delta_1)}{1 - \delta_{s}} \hspace{-.5mm}
  \log \hspace{-.5mm} \left( \hspace{-.5mm} \frac{1}{2} \hspace{-.75mm} - \hspace{-.75mm} \frac{1}{2}  \sqrt{\frac{\delta_{\gamma} \hspace{-.5mm} + \hspace{-.5mm} \delta_{s}}{1 + \delta_{\gamma}}} \right)\text{where}~s \hspace{-.5mm} := \hspace{-.75mm} |T \cup T^{k + \lfloor c \gamma \rfloor}|.  \nonumber
\end{equation}
Thus completes the proof.

\section{Conclusion}
\label{sec:conclusion}

In this paper, we have investigated the recovery performance of OMP when the number of iterations exceeds the sparsity $K$ of input signals. We have established a lower bound on the
number of iterations of OMP, expressed in terms of RIC, that guarantees the exact recovery of sparse signals. Our result demonstrates that OMP can accurately
recover any $K$-sparse signal in $\lceil 2.8K \rceil$ iterations with the RIC being an absolute constant. This result bridges the gap between the recent result of~\cite{zhang2011sparse} and the fundamental limit of the OMP algorithm at which exact sparse recovery cannot be uniformly ensured.
%
%
Considering that a large number of iterations leads to a high
computational complexity and also imposes a strict limitation on the
sparsity level ($K \leq \frac{m}{c}$), the reduction on the number
of iterations offers computational benefits as well as
relaxations in the measurement size and the sparsity range of underlying signals to be recovered.

\appendices
\numberwithin{equation}{section}
\newcounter{mytempthcnt}
\setcounter{mytempthcnt}{\value{theorem}}
\setcounter{theorem}{2}

\section{Proof of Proposition \ref{prop:bound}} \label{app:bound}
\subsection{Upper bound for $\|
\mathbf{x}_{\Gamma^{k + k'}} \|_2^2$}
 
Since $\mathbf{r}^{k + k'} = \mathbf{y}
  - \mathbf{\Phi} \hat{\mathbf{x}}^{k + k'} = \mathbf{\Phi} ( \mathbf{x}
  - \hat{\mathbf{x}}^{k + k'} )$, 
\begin{eqnarray}
  \| \mathbf{r}^{k + k'} \|_2^2 &\overset{(a)}{\geq}& ({1 - \delta_{|T \cup T^{k + k'}|}} ) \| \mathbf{x} - \hat{\mathbf{x}}^{k + k'}
  \|_2^2  \nonumber
\\
  &\overset{(b)}{\geq}&   ({1 - \delta_{|T \cup T^{k + k'}|}} ) \|\mathbf{x}_{\Gamma^{k + k'}}
  \|_2^2 \nonumber \\
  &\overset{(c)}{\geq}&  ({1 - \delta_{|T \cup T^{k + \lfloor c \gamma \rfloor}|}} ) \|\mathbf{x}_{\Gamma^{k + k'}}
  \|_2^2,
\end{eqnarray}
where (a) is from the RIP, (b) is due to $\| \mathbf{x} - \hat{\mathbf{x}}^{k + k'}  \|_2^2 \geq \|
\mathbf{x}_{T \backslash T^{k + k'}} \|_2^2 = \|
\mathbf{x}_{\Gamma^{k + k'}} \|_2^2$, and (c) is because \begin{equation}
k + k' \overset{(d)}{=} k + \lceil c2^{L - 1} \rceil - 1 \overset{(e)}{\leq} k + \lceil c \gamma \rceil - 1 \leq k + \lfloor c \gamma \rfloor, \label{eq:fact1}
\end{equation} 
where (d) is from \eqref{eq:jjff66} and (e) is because $2^{L - 1} \leq \gamma$.

\subsection{Lower bound for $\|
\mathbf{x}_{\Gamma^{k} \backslash {\Gamma}^k_{{L}- 1}} \|_2^2$} 
Compared to the upper bound analysis, the lower bound analysis
requires a little more effort. The following lemmas will be used in our analysis.


%
\vspace{1mm}

\begin{lemma} \label{lem:rleq}
The residual $\mathbf{r}^k$ satisfies
\begin{equation}
\| \mathbf{r}^{k} \|_2^2 \leq (1 + \delta_{N^k})
\|\mathbf{x}_{\Gamma^{k}}\|_2^2.
\end{equation}
\end{lemma}

\begin{proof}
From Table~\ref{tab:OMP}, the residual of OMP can be expressed as
$\mathbf{r}^{k} = \mathbf{y} - \mathbf{\Phi} \hat{\mathbf{x}}^k = \mathbf{y} - \mathcal{P}_{T^k} \mathbf{y}$. Since $\mathbf{r}^{k} \bot \mathcal{P}_{T^k} \mathbf{y}$,
\begin{eqnarray}
  \|\mathbf{r}^{k}\|_2^2 &=& \|\mathbf{y}\|_2^2 - \|\mathcal{P}_{T^k} \mathbf{y}\|_2^2 \nonumber \\
  &\overset{(a)}{\leq}& \|\mathbf{y}\|_2^2 - \|\mathcal{P}_{T^k
\cap T} \mathbf{y}\|_2^2 \nonumber \\
  &\overset{(b)}{=}&  \|\mathbf{y} - \mathcal{P}_{T^k \cap T}
\mathbf{y}\|_2^2,\label{eq:chacha1}
\end{eqnarray}
where (a) is due to $(T^{k} \cap T) \subseteq T^{k}$ so that
${span}(\mathbf{\Phi}_{T^{k}}) \supseteq {span}(\mathbf{\Phi}_{T^{k}
\cap T})$ and $\|\mathcal{P}_{T^k} \mathbf{y}\|_2^2 \geq |\mathcal{P}_{T^k \cap T} \mathbf{y}\|_2^2$ and (b) is because $(\mathbf{y} - \mathcal{P}_{T^k \cap T}
\mathbf{y}) \bot \mathcal{P}_{T^k \cap T} \mathbf{y}$.
Noting that $\mathcal{P}_{T^{k} \cap T} \mathbf{y}$ is the
projection of $\mathbf{y}$ onto ${span}(\mathbf{\Phi}_{T^{k} \cap
T})$, we have
\begin{eqnarray}
\|\mathbf{y} - \mathcal{P}_{T^{k} \cap T} \mathbf{y} \|_2^2 &=&
\min_{\mathbf{u}:{supp}(\mathbf{u}) = T^{k} \cap T}
{\|\mathbf{y}-\mathbf{\Phi} \mathbf{u}\|}_2^2 \nonumber \\
&\leq&{\|\mathbf{y}-\mathbf{\Phi}_{T^{k} \cap T}
\mathbf{x}_{T^{k} \cap T} \|}_2^2 = {\|\mathbf{\Phi}_{\Gamma^{k}} \mathbf{x}_{\Gamma^{k}} \|}_2^2 \nonumber \\
&\leq& (1 + \delta_{N^k}) \|\mathbf{x}_{\Gamma^{k}}\|_2^2,
\label{eq:ggsssss1} \nonumber
\end{eqnarray}
where the last inequality is from the RIP ($|\Gamma^{k}| = N^k$).

This, together with \eqref{eq:chacha1}, establishes the lemma.
\end{proof}

\vspace{1mm}


The second lemma provides a lower bound for $\| \mathbf{r}^l \|_2^2 - \| \mathbf{r}^{l + 1} \|_2^2$
in the $(l + 1)$-th ($l \geq k$) iteration of OMP.

\vspace{1mm}

\begin{lemma} \label{prop:residual1}
For given integer $l \geq k$, we have
\begin{equation} \label{eq:38hh}
 \|\mathbf{r}^l \|_2^2 - \| \mathbf{r}^{l + 1} \|_2^2 \geq  \frac{1 - \delta_{ | {\Gamma}^k_{\tau} \cup T^l |}}{(1 +
\delta_1) | {\Gamma}^k_{\tau} |}  \left( \| \mathbf{r}^l \|_2^2 -
  \| \mathbf{\Phi}_{{\Gamma^{k}} \backslash {\Gamma}^k_\tau} \mathbf{x}_{{\Gamma^{k}} \backslash {\Gamma}^k_\tau} \|_2^2
  \right),
\end{equation}
where $\tau = 1, \cdots, \lfloor \log_2 {N^k} \rfloor + 1$.
\end{lemma}

\vspace{1mm}
\begin{proof}
The proof consists of two parts. First, we show that the residual
power difference of OMP satisfies
\begin{equation} \label{eq:residual_A111}
  \|\mathbf{r}^l\|_2^2 - \|\mathbf{r}^{l + 1}\|_2^2 \geq \frac{\|\mathbf{\Phi}' \mathbf{r}^l\|_\infty^2}{1 + \delta_{1}}.
\end{equation}
Second, we show that
  \begin{equation} \label{eq:a111}
\hspace{-2.5mm}\|\mathbf{\Phi}' \mathbf{r}^l \|_\infty^2 \geq  \frac{1 - \delta_{ |
{\Gamma}^k_{\tau} \cup T^l |}}{ | {\Gamma}^k_{\tau}
  |}  \left( \| \mathbf{r}^l \|_2^2 -
  \| \mathbf{\Phi}_{{\Gamma^{k}} \backslash {\Gamma}^k_\tau} \mathbf{x}_{{\Gamma^{k}} \backslash {\Gamma}^k_\tau} \|_2^2
  \right).
  \end{equation}
The lemma is established by combining \eqref{eq:residual_A111}
and \eqref{eq:a111}.
  
\vspace{1mm}  
  
{\it 1) Proof of \eqref{eq:residual_A111}}: 
  Observe that the residual of OMP satisfies
\begin{eqnarray}
 \mathbf{r}^l - \mathbf{r}^{l + 1} 
&=& (\mathbf{y} - \mathcal{P}_{T^{l}} \mathbf{y}) - (\mathbf{y} -
\mathcal{P}_{T^{l + 1}} \mathbf{y}) \nonumber \\
&\overset{(a)}{=}& (\mathcal{P}_{T^{l + 1}} - \mathcal{P}_{T^{l + 1}}
\mathcal{P}_{T^{l}} )\mathbf{y} \nonumber \\ &=& \mathcal{P}_{T^{l +
1}} (\mathbf{y} - \mathcal{P}_{T^{l}}
\mathbf{y}) = \mathcal{P}_{T^{l + 1}} \mathbf{r}^l, \label{eq:44c}
\end{eqnarray}
where (a) is because ${span}(\mathbf{\Phi}_{T^{l}})
\subseteq {span}(\mathbf{\Phi}_{T^{l + 1}})$ so that
$\mathcal{P}_{T^l} \mathbf{y} = \mathcal{P}_{T^{l + 1}}
\mathcal{P}_{T^{l}} \mathbf{y}$.
Recalling that $t^{l + 1}$ is the index chosen at the $(l + 1)$-th
iteration and $T^{l + 1} = T^l \cup t^{l + 1}$, we have
${span}(\mathbf{\Phi}_{T^{l + 1}}) \supseteq {span}({\phi}_{t^{l +
1}})$, and hence
\begin{eqnarray} \label{eq:residual_A}
  \|\mathbf{r}^l - \mathbf{r}^{l + 1}\|_2^2 = \|\mathcal{P}_{T^{l + 1}}
 \mathbf{r}^l\|_2^2 \geq \|\mathcal{P}_{t^{l + 1}}
 \mathbf{r}^l\|_2^2. \nonumber
\end{eqnarray}
Further, noting that $\|\mathbf{r}^l - \mathbf{r}^{l + 1}\|_2^2 =
\|\mathbf{r}^l\|_2^2 - \|\mathbf{r}^{l + 1}\|_2^2$ and that
$\mathcal{P}_{t^{l + 1}} = \mathcal{P}'_{t^{l + 1}} = (
\mathbf{\phi}_{t^{l + 1}}^\dag)' \mathbf{\phi}'_{t^{l + 1}}$, we
have
\begin{eqnarray}
  \|\mathbf{r}^l \|_2^2 - \|\mathbf{r}^{l + 1}\|_2^2 &\geq& \|\mathcal{P}_{t^{l + 1}}
  \mathbf{r}^l\|_2^2 = \|({\phi}_{t^{l + 1}}^\dag)' {\phi}'_{t^{l + 1}} \mathbf{r}^l\|_2^2 \nonumber \\
  &\overset{(a)}{=}& ({\phi}'_{t^{l + 1}} \mathbf{r}^l)^2 \|\mathbf{\phi}_{t^{l +
  1}}^\dag\|_2^2 \nonumber \\
  & \overset{(b)}{\geq} & \frac{({\phi}'_{t^{l + 1}} \mathbf{r}^l)^2 \|\mathbf{\phi}_{t^{l + 1}}^\dag \mathbf{\phi}_{t^{l + 1}}\|_2^2}{\|\mathbf{\phi}_{t^{l + 1}}\|_2^2} \nonumber \\
  & \overset{(c)}{\geq} & \frac{({\phi}'_{t^{l + 1}} \mathbf{r}^l)^2}{1 + \delta_1} = \frac{\|\mathbf{\Phi}' \mathbf{r}^l\|_\infty^2}{1 + \delta_{1}},
  \label{eq:mmmss12}
\end{eqnarray}
where (a) is because ${\phi}'_{t^{l + 1}}
\mathbf{r}^l$ is a scalar, (b) is from the norm inequality, and (c) is due to the RIP.

\begin{figure}[t]
\begin{center}
\includegraphics[scale = 1]{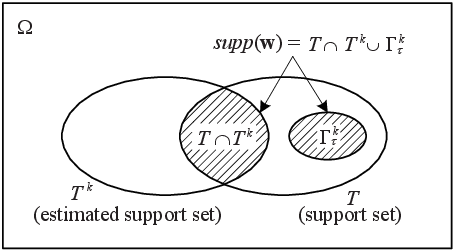}
\caption{Illustration of ${supp}(\mathbf{w})$.} \label{fig:set_3}\vspace{-2mm}
\end{center}
\end{figure}

\vspace{1mm}

 {\it 2) Proof of \eqref{eq:a111}}: First, let $\mathbf{w} \in
\mathbb{R}^n$ be the vector such that
\begin{equation} \label{eq:jjjjffff0000}
\mathbf{w}_S =
\begin{cases}
\mathbf{x}_S & S = T \cap T^{k} \cup
{\Gamma}^k_\tau,\\
\mathbf{0}         & S = \Omega \backslash (T\cap T^{k} \cup
{\Gamma}^k_\tau),
\end{cases}
\end{equation}
where $\tau \in \{ 1, 2,
\cdots, \lfloor \log_2 {N^k} \rfloor + 1\}$.
An illustration of ${supp}(\mathbf{w})$ is provided in
Fig.~\ref{fig:set_3}. Then, by noting that 
${supp}(\mathbf{\Phi}' \mathbf{r}^l) = \Omega \backslash
T^l$, we have
%
 \begin{eqnarray}
 \langle \mathbf{\Phi}' \mathbf{r}^l, \mathbf{w} \rangle
&=& \langle (\mathbf{\Phi}' \mathbf{r}^l)_{\Omega \backslash T^l},
\mathbf{w}_{\Omega \backslash T^l} \rangle  \nonumber \\
&\overset{(a)}{\leq}& \|(\mathbf{\Phi}' \mathbf{r}^l)_{\Omega \backslash
T^l}\|_\infty \|\mathbf{w}_{\Omega \backslash T^l}\|_1 \nonumber  \\
&\overset{(b)}{\leq}& \sqrt{|\Omega \backslash T^l
|}~ \|\mathbf{\Phi}' \mathbf{r}^l \|_\infty
\|\mathbf{w}_{\Omega \backslash T^l}\|_2 \nonumber \\
&\overset{(c)}{\leq}& \sqrt{|{{\Gamma}^k_\tau}
|}~ \|\mathbf{\Phi}' \mathbf{r}^l \|_\infty
\|\mathbf{w}_{\Omega \backslash T^l}\|_2, \label{eq:youmyou}
 \end{eqnarray}
where (a) is from H\"{o}lder's inequality, 
(b) follows from the norm inequality
($\|\mathbf{u}\|_1 \leq \sqrt{\|\mathbf{u}\|_0} 
\|\mathbf{u}\|_2$), and (c) is because some indices in $\Gamma^k_\tau$ may be identified in iterations $k + 1, \cdots, l$ so that $|\Omega \backslash T^l| \leq |{{\Gamma}^k_\tau}
|$. Since ${supp}(\hat{\mathbf{x}}^l) = T^l$ and
${supp}(\mathbf{\Phi}' \mathbf{r}^l) = \Omega \backslash T^l$, it is
clear that $\langle \mathbf{\Phi}' \mathbf{r}^l, \hat{\mathbf{x}}^l
\rangle = \mathbf{0}$ and $\langle \mathbf{\Phi}' \mathbf{r}^l,
\mathbf{w} \rangle = \langle \mathbf{\Phi}' \mathbf{r}^l, \mathbf{w}
-\hat{\mathbf{x}}^l \rangle$. Thus, \eqref{eq:youmyou} can be
rewritten as
 \begin{eqnarray} \label{eq:33333}
\|\mathbf{\Phi}' \mathbf{r}^l \|_\infty \geq \frac{\langle
\mathbf{\Phi}' \mathbf{r}^l, \mathbf{w} \rangle}
{\sqrt{|{\Gamma}^k_\tau|} ~ \|\mathbf{w}_{\Omega \backslash T^l}\|_2} =
\frac{\langle \mathbf{\Phi}' \mathbf{r}^l, \mathbf{w}
-\hat{\mathbf{x}}^l \rangle}{\sqrt{|{\Gamma}^k_\tau |}~
\|\mathbf{w}_{\Omega \backslash T^l}\|_2}.
 \end{eqnarray}

Next, we build a lower bound for $\langle \mathbf{\Phi}' \mathbf{r}^l, \mathbf{w} - \hat{\mathbf{x}}^l \rangle$. Note that
\begin{eqnarray}
 \lefteqn{2 \langle \mathbf{\Phi}' \mathbf{r}^l, \mathbf{w} -
\hat{\mathbf{x}}^l \rangle = 2 \langle \mathbf{\Phi}(\mathbf{w} - \hat{\mathbf{x}}^l), \mathbf{r}^l \rangle } \nonumber \\
 &=& \|\mathbf{\Phi}(\mathbf{w} - \hat{\mathbf{x}}^l)\|_2^2 + \|\mathbf{r}^l\|_2^2 - \|\mathbf{r}^l - \mathbf{\Phi}(\mathbf{w} - \hat{\mathbf{x}}^l)\|_2^2
\nonumber \\
 &\overset{(a)}{=}& \|\mathbf{\Phi}(\mathbf{w} - \hat{\mathbf{x}}^l)\|_2^2 + \|\mathbf{r}^l\|_2^2  - \|\mathbf{\Phi}(\mathbf{x} - \mathbf{w})\|_2^2 \label{eq:geisiche99} \nonumber \\
 &=& \|\mathbf{\Phi}(\mathbf{w} - \hat{\mathbf{x}}^l)\|_2^2 + \|\mathbf{r}^l\|_2^2 - \|\mathbf{\Phi}_{{\Gamma^{k}}\backslash {\Gamma}^k_\tau} \mathbf{x}_{{\Gamma^{k}} \backslash {\Gamma}^k_\tau}\|_2^2, ~~~~~~~\label{eq:101}\end{eqnarray}
where (a) holds because $\mathbf{r}^l +
\mathbf{\Phi} \hat{\mathbf{x}}^l = \mathbf{y} = \mathbf{\Phi}
\mathbf{x}$.
When $\|\mathbf{r}^l\|_2^2 - \|\mathbf{\Phi}_{{\Gamma^{k}}
\backslash {\Gamma}^k_\tau} \mathbf{x}_{{\Gamma^{k}} \backslash
{\Gamma}^k_\tau}\|_2^2 \geq 0$,\footnote{We only need to consider $\|\mathbf{r}^l\|_2^2 - \|\mathbf{\Phi}_{{\Gamma^{k}}\backslash {\Gamma}^k_\tau} \mathbf{x}_{{\Gamma^{k}} \backslash
{\Gamma}^k_\tau}\|_2^2 \geq 0$ because if $\|\mathbf{r}^l\|_2^2 - \|\mathbf{\Phi}_{{\Gamma^{k}}\backslash {\Gamma}^k_\tau} \mathbf{x}_{{\Gamma^{k}} \backslash
{\Gamma}^k_\tau}\|_2^2 < 0$, (\ref{eq:a111}) holds trivially since 
$\|\mathbf{\Phi}' \mathbf{r}^l \|_\infty^2 \geq 0$.}
  using $a^2 + b^2
\geq 2 {ab}$ in (\ref{eq:101}) yields
\begin{equation}
 \langle \mathbf{\Phi}' \mathbf{r}^l, \mathbf{w} - \hat{\mathbf{x}}^l \rangle \geq  \|\mathbf{\Phi}(\mathbf{w} - \hat{\mathbf{x}}^l) \|_2
\sqrt{\|\mathbf{r}^l \|_2^2 - \|\mathbf{\Phi}_{{\Gamma^{k}}
\backslash {\Gamma}^k_\tau} \mathbf{x}_{{\Gamma^{k}} \backslash
{\Gamma}^k_\tau}\|_2^2}.  \nonumber
\end{equation}
Moreover, since ${supp}(\mathbf{w} - \hat{\mathbf{x}}^l) = (T \cap T^{k}
\cup {\Gamma}^k_\tau) \cup T^l \subseteq {\Gamma}^k_{\tau} \cup
T^l$, 
  \begin{eqnarray}
  \|\mathbf{\Phi}(\mathbf{w} - \hat{\mathbf{x}}^l) \|_2 &\overset{(a)}{\geq}& \sqrt{1 -
  \delta_{|{\Gamma}^k_{\tau} \cup T^l|}} ~ \|\mathbf{w} -
  \hat{\mathbf{x}}^l\|_2 \nonumber \\
  &\geq& \sqrt{1 - \delta_{|{\Gamma}^k_{\tau} \cup T^l|}} ~ \|(\mathbf{w} - \hat{\mathbf{x}}^l)_{\Omega \backslash
  T^l}\|_2 \nonumber \\
  \label{eq:103}&\overset{(b)}{=}& \sqrt{1 - \delta_{|{\Gamma}^k_{\tau} \cup T^l|}} ~
  \|\mathbf{w}_{\Omega \backslash T^l}\|_2,
  \end{eqnarray}
where (a) is from the RIP and (b) is uses $(\hat{\mathbf{x}}^l)_{\Omega
\backslash T^l} = \mathbf{0}$. Hence, 
\begin{eqnarray}
 \lefteqn{\langle \mathbf{\Phi}' \mathbf{r}^l, \mathbf{w} - \hat{\mathbf{x}}^l \rangle \geq \|\mathbf{w}_{\Omega \backslash T^l}\|_2} \nonumber\\
 &~~~
\times  \sqrt{(1 - \delta_{|{\Gamma}^k_{\tau} \cup T^l|}) (\|\mathbf{r}^l \|_2^2 - \|\mathbf{\Phi}_{{\Gamma^{k}}
\backslash {\Gamma}^k_\tau} \mathbf{x}_{{\Gamma^{k}} \backslash
{\Gamma}^k_\tau}\|_2^2)}.~~~~ \label{eq:dada222}
\end{eqnarray}

Finally, plugging 
(\ref{eq:dada222}) into (\ref{eq:33333}), we obtain (\ref{eq:a111}).
\end{proof}
 
\vspace{1mm}
The consequence of Lemma~\ref{prop:residual1} is the following lemma, which is crucial for proving the lower bound for $\|
\mathbf{x}_{\Gamma^{k} \backslash {\Gamma}^k_{{L}- 1}} \|_2^2$.
 
\vspace{1mm}

\begin{lemma} \label{lem:residual}
For any integer $l' \geq l \geq k$ and $\tau \in  \{1, \cdots, \lfloor
\log_2 {N^k} \rfloor + 1\}$, the residual of OMP satisfies
\begin{eqnarray}
\hspace{-1mm} \| \mathbf{r}^{l'} \hspace{-.5mm} \|_2^2 \hspace{-.5mm} - \hspace{-.75mm} \| \mathbf{\Phi}_{{\Gamma^{k}}
\backslash
  {\Gamma}^k_\tau} \mathbf{x}_{{\Gamma^{k}} \backslash
  {\Gamma}^k_\tau}\|_2^2 \hspace{-.5mm} \leq \hspace{-.5mm} C_{\tau,l, l'} \hspace{-.5mm} \left(\hspace{-.25mm}  \| \mathbf{r}^l \hspace{-.25mm} \|_2^2 \hspace{-.5mm} - \hspace{-.75mm} \| \mathbf{\Phi}_{{\Gamma^{k}} \backslash
  {\Gamma}^k_\tau} \mathbf{x}_{{\Gamma^{k}} \backslash
  {\Gamma}^k_\tau}\|_2^2\right) \nonumber
\end{eqnarray}
where $C_{\tau,l, l'} = \exp \Big( - \frac{(1 - \delta_{| {\Gamma}^k_\tau
\cup T^{l' - 1}|}) (l' - l)} {(1 + \delta_1) |
{\Gamma}^k_\tau |} \Big)$.
\end{lemma} 
\vspace{1mm}
\begin{proof}
Using $a \geq 1 - e^{-a}$ with $a = \frac{1 - \delta_{|
{\Gamma}^k_\tau \cup T^l
 |}}{(1 + \delta_1) | {\Gamma}^k_\tau  |}$, we have
\begin{equation} \label{eq:45hh}
{ \frac{1 - \delta_{ |
{\Gamma}^k_\tau \cup T^l
 |}}{(1 + \delta_1) | {\Gamma}^k_\tau  |} \geq 1 - \exp  \left( - \frac{1 -
\delta_{ | {\Gamma}^k_\tau \cup T^l  |}}{(1 + \delta_1)
|{\Gamma}^k_\tau  |}
 \right)} > 0.
\end{equation}
Then, we can rewrite \eqref{eq:38hh} in Lemma~\ref{prop:residual1} as\footnote{\label{positive}Note that $\|\mathbf{r}^l \|_2^2 - \| \mathbf{r}^{l + 1} \|_2^2 \geq 0$ due to orthogonal projection at each iteration of OMP. When $\| \mathbf{r}^l \|_2^2 -
  \| \mathbf{\Phi}_{{\Gamma^{k}} \backslash {\Gamma}^k_\tau} \mathbf{x}_{{\Gamma^{k}} \backslash {\Gamma}^k_\tau} \|_2^2 \geq 0$, \eqref{eq:residualabc} directly follows from \eqref{eq:38hh} and \eqref{eq:45hh}. When $ \| \mathbf{r}^l \|_2^2 -
  \| \mathbf{\Phi}_{{\Gamma^{k}} \backslash {\Gamma}^k_\tau} \mathbf{x}_{{\Gamma^{k}} \backslash {\Gamma}^k_\tau} \|_2^2 < 0$, \eqref{eq:residualabc} holds trivially because in this case its right-hand side is negative. }
\begin{eqnarray}
 && \hspace{-1mm} \|\mathbf{r}^l \|_2^2 - \| \mathbf{r}^{l + 1} \|_2^2 \geq  \left(1 - \exp  \left( - \frac{1 -
\delta_{ | {\Gamma}^k_\tau \cup T^l  |}}{(1 + \delta_1)
|{\Gamma}^k_\tau  |} \right) \right) \nonumber \\
&&~~~~~~~~~~~~~~~~~~\times \left( \| \mathbf{r}^l \|_2^2 -
  \| \mathbf{\Phi}_{{\Gamma^{k}} \backslash {\Gamma}^k_\tau} \mathbf{x}_{{\Gamma^{k}} \backslash {\Gamma}^k_\tau} \|_2^2
  \right).~~~~~~~~~ \label{eq:residualabc}
\end{eqnarray}
Subtracting both sides of (\ref{eq:residualabc}) by $\| \mathbf{r}^l
\|_2^2 - \| \mathbf{\Phi}_{T \backslash {\Gamma}^k_\tau}
\mathbf{x}_{T \backslash {\Gamma}^k_\tau} \|_2^2$, we have
\begin{eqnarray}
&& \hspace{-1mm} \| \mathbf{r}^{l + 1}  \|_2^2 -  \| \mathbf{\Phi}_{{\Gamma^{k}}
\backslash {\Gamma}^k_\tau} \mathbf{x}_{{\Gamma^{k}} \backslash
{\Gamma}^k_\tau}\|_2^2 \hspace{1.5mm} \leq \hspace{1.5mm} \exp \left( - \frac{1 -
\delta_{|{\Gamma}^k_\tau \cup T^l|}}{(1 + \delta_1) |{\Gamma}^k_\tau
|}\right) \nonumber \\
&&~~~~~~~~~~~~~~~~~~\times (\| \mathbf{r}^l  \|_2^2 - \|
\mathbf{\Phi}_{{\Gamma^{k}} \backslash {\Gamma}^k_\tau}
\mathbf{x}_{{\Gamma^{k}} \backslash {\Gamma}^k_\tau} \|_2^2),
\label{eq:A1}
\end{eqnarray}
and thus
\begin{eqnarray}
&& \hspace{-1mm}\| \mathbf{r}^{l + 2}  \|_2^2 -  \| \mathbf{\Phi}_{{\Gamma^{k}}
\backslash {\Gamma}^k_\tau} \mathbf{x}_{{\Gamma^{k}} \backslash
{\Gamma}^k_\tau} \|_2^2 \leq\exp \left( - \frac{1 -
\delta_{|{\Gamma}^k_\tau \cup T^{l + 1}|}} {(1 + \delta_1) |
{\Gamma}^k_\tau |}\right) \nonumber \\
&&~~~~~~~~~~~~~~~~~~\times (\|\mathbf{r}^{l + 1}  \|_2^2 -  \| \mathbf{\Phi}_{{\Gamma^{k}} \backslash {\Gamma}^k_\tau} \mathbf{x}_{{\Gamma^{k}} \backslash {\Gamma}^k_\tau}  \|_2^2),~~ \\
&&~~~~~~~~~~~~~\vdots \nonumber  \\
&& \hspace{-1mm}\| \mathbf{r}^{l'}  \|_2^2 - \| \mathbf{\Phi}_{{\Gamma^{k}}
\backslash {\Gamma}^k_\tau} \mathbf{x}_{{\Gamma^{k}} \backslash
{\Gamma}^k_\tau} \|_2^2 \hspace{1mm} \leq  \hspace{1mm} \exp \left( - \frac{1 - \delta_{ |
{\Gamma}^k_\tau \cup T^{l' - 1}|}} {(1 + \delta_1) |{\Gamma}^k_\tau
|}\right)
 \nonumber \\
&&~~~~~~~~~~~~~~~~~~\times\| \mathbf{r}^{l' - 1}  \|_2^2 - { \|
\mathbf{\Phi}_{{\Gamma^{k}} \backslash {\Gamma}^k_\tau}
\mathbf{x}_{{\Gamma^{k}} \backslash {\Gamma}^k_\tau} \|_2^2}).~~
\label{eq:B}
\end{eqnarray}
From \eqref{eq:A1}--\eqref{eq:B}, we have
\begin{eqnarray}
 \lefteqn{\| \mathbf{r}^{l'} \|_2^2 -  \| \mathbf{\Phi}_{{\Gamma^{k}}
\backslash {\Gamma}^k_\tau} \mathbf{x}_{{\Gamma^{k}} \backslash
{\Gamma}^k_\tau} \|_2^2} \nonumber \\
 &\leq& \hspace{-2mm} \prod_{i = l}^{l' - 1} \exp \left( - \frac{1 -
\delta_{|{\Gamma}^k_\tau \cup T^{i}|}}{(1 + \delta_1) |
{\Gamma}^k_\tau |}\right) (\| \mathbf{r}^{l}  \|_2^2 - {
\|\mathbf{\Phi}_{{\Gamma^{k}} \backslash {\Gamma}^k_\tau}
\mathbf{x}_{{\Gamma^{k}} \backslash {\Gamma}^k_\tau} \|_2^2}) \nonumber \\
&\leq& \hspace{-2mm} C_{\tau,l, l'} (\| \mathbf{r}^l \|_2^2 - { \|
\mathbf{\Phi}_{{\Gamma^{k}} \backslash {\Gamma}^k_\tau}
\mathbf{x}_{{\Gamma^{k}} \backslash {\Gamma}^k_\tau} \|_2^2}),
\nonumber
\end{eqnarray}
where
$C_{\tau,l, l'} = \exp \left( - \frac{(1 - \delta_{| {\Gamma}^k_\tau
\cup T^{l'- 1}|}) (l' - l)} {(1 + \delta_1) | {\Gamma}^k_\tau |}
\right)$.
 \end{proof}

\vspace{2mm}

Now, we are ready to identify the lower bound for $\|
\mathbf{x}_{\Gamma^{k} \backslash {\Gamma}^k_{{L}- 1}} \|_2^2$. First, let
$k_0 = k$ and $
k_i = k + \sum_{\tau = 1}^i \left\lceil \frac{c}{4}
|{\Gamma}^k_{\tau}| \right\rceil$ for $i = 1, \cdots,
{L},$
where ${\Gamma}^k_{\tau}$ and $L$ are defined in (\ref{eq:jjjjffff})
and~\eqref{eq:mu1}--\eqref{eq:mu4}, respectively. Then, by applying
Lemma \ref{lem:residual} with $l' = k_i$ and $l = k_{i - 1}$,
$i = 1, \cdots, L$, we have
\begin{eqnarray}
  \lefteqn{\| {{\mathbf{r}}^{k_i}} \|_{2}^{2} - \| \mathbf{\Phi}_{{\Gamma^{k}}
\backslash {\Gamma}^k_i} \mathbf{x}_{{\Gamma^{k}} \backslash
  {\Gamma}^k_i}\|_2^2} \nonumber \\ &~\leq
  C_{i,k_{i-1}, k_i} (\|  \mathbf{r}^{k_{i-1}} \|_2^2 - \| \mathbf{\Phi}_{{\Gamma^{k}} \backslash
  {\Gamma}^k_{i}} \mathbf{x}_{{\Gamma^{k}} \backslash
  {\Gamma}^k_i}\|_2^2), \label{eq:11f}
\end{eqnarray}
where $
 C_{i,k_{i - 1},k_i} = \exp \left( - \frac{(1 - \delta_{| {\Gamma}^k_i \cup T^{k_i
- 1}|}) (k_{i} - k_{i - 1}) } {(1 + \delta_1) |
{\Gamma}^k_i |}  \right)$.

We next build an upper bound for the constant $C_{i,k_{i - 1},k_i}$ in~\eqref{eq:11f}. Since  $\frac{k_i - k_{i - 1}}{|{\Gamma}^k_{i}|} = \frac{\lceil \frac{c}{4} |{\Gamma}^k_i | \rceil}{|{\Gamma}^k_{i}|}  \geq
 \frac{c}{4}$ for $i = 1, 2, \cdots, L$, and also noting that $k_i - 1 \leq k_L$,
we can rewrite $C_{i,k_{i - 1},k_i}$ as
\begin{eqnarray}
  C_{i, k_{i - 1}, k_i}  \leq \exp \left(- \frac{c (1 - \delta_{|T \cup T^{k_{L}}|})}{4 (1 + \delta_1)} \right), \label{eq:61hh}
\end{eqnarray}
where we have used monotonicity of the RIC ($|{\Gamma}^k_i \cup T^{k_{i} - 1}| \leq |{\Gamma}^k \cup
T^{k_{i} - 1}| \leq |T \cup T^{k_L}|$ for $i = 1, \cdots, L$).

Further, we find an upper bound for $k_L$ in~\eqref{eq:61hh}. Recalling from \eqref{eq:jjjjffff} that $|{\Gamma}^k_\tau| \leq 2^{\tau} - 1$ for
$\tau = 1, 2, \cdots, L$, we have
\begin{equation} \label{eq:62hh}
 k_{L} = k + \sum_{\tau = 1}^L  \left\lceil \frac{c}{4} |{\Gamma}^k_\tau| \right\rceil \leq k + \sum_{\tau = 1}^{L} \left\lceil \frac{c}{4} (2^{\tau} - 1) \right\rceil. 
\end{equation}
Since \eqref{eq:j2jiayou111} directly implies that $c \geq 4 \log 2 \geq 2$ under the RIP assumption in Theorem~\ref{thm:general_1}, one can show that (see Appendix~\ref{app:d})
\begin{equation}
 \sum_{\tau = 1}^{L} \left\lceil \frac{c}{4} (2^{\tau} - 1) \right\rceil \leq \lceil c 2^{L - 1} \rceil - 1, \label{eq:61h}
 \end{equation}
 Hence, \eqref{eq:62hh} becomes
\begin{equation}
 k_{L} \leq k + \lceil c 2^{L - 1} \rceil - 1  = k + k' \overset{(a)}{\leq} k + \lfloor c \gamma \rfloor,  \label{eq:64h}
\end{equation}
where (a) is from~\eqref{eq:fact1}. This, together with \eqref{eq:61hh}, implies 
\begin{eqnarray}
 C_{i, k_{i - 1}, k_i} \leq \exp \left(- \frac{c (1 - \delta_{|T \cup T^{k + \lfloor c \gamma \rfloor}|})}{4 (1 + \delta_1)}
 \right). \label{eq:chh}
\end{eqnarray}

%
%
%
Now we can construct an upper bound for $\| \mathbf{r}^{k_i} \|_2^2$ using~\eqref{eq:11f} and~\eqref{eq:chh}. By denoting $\eta := \exp \left(- \frac{c (1 - \delta_{|T \cup T^{k + \lfloor c \gamma \rfloor}|})}{4 (1 + \delta_1)}
 \right),$ we rewrite (\ref{eq:11f}) as
\begin{equation}
\label{eq:11ff}
  \| \mathbf{r}^{k_i} \|_2^2
  \leq \eta \| \mathbf{r}^{k_{i-1}} \|_2^2 + (1 - \eta) \| \mathbf{\Phi}_{{\Gamma^{k}} \backslash
  {\Gamma}^k_i} \mathbf{x}_{{\Gamma^{k}} \backslash  {\Gamma}^k_i}\|_2^2, ~i = 1, \cdots, L. \nonumber
\end{equation}
Some additional manipulations yield the following result,
\begin{eqnarray}
  \lefteqn{\hspace{-.5mm}\| \mathbf{r}^{k_{L}} \|_2^2} \nonumber \\
  &\hspace{-5mm} \leq& \hspace{-4mm} \eta^{L} \| \mathbf{r}^{k} \|_2^2 + (1 - \eta) \sum_{\tau = 1}^{L}\eta^{{L}- \tau} \|\mathbf{\Phi}_{{\Gamma^{k}} \backslash {\Gamma}^k_\tau}  \mathbf{x}_{{\Gamma^{k}} \backslash {\Gamma}^k_\tau} \|_2^2 \nonumber \\
  &\hspace{-5mm}\overset{(a)}{\leq}& \hspace{-4mm}  \eta^{L} \| \mathbf{r}^{k} \|_2^2 + (1 - \eta) (1 + \delta_\gamma) \sum_{\tau = 1}^{L}\eta^{{L}-
  \tau} \|\mathbf{x}_{{\Gamma^{k}} \backslash {\Gamma}^k_\tau} \|_2^2 \nonumber
   \\
  &\hspace{-5mm}\overset{(b)}{\leq}&  \hspace{-4mm} \left(\eta^L \|\mathbf{x}_{\Gamma^{k} \backslash {\Gamma}^{k}_0}\|_2^2 + (1 - \eta)  \sum_{\tau = 1}^{L}\eta^{{L}-
  \tau} \| \mathbf{x}_{{\Gamma^{k}} \backslash {\Gamma}^k_\tau}\|_2^2\right) (1 + \delta_{\gamma}) \nonumber
   \\
  &\hspace{-5mm}\overset{(c)}{\leq}& \hspace{-4mm}  \left({(\sigma  \eta )^{{L}}} + (1 - \eta)  \sum_{\tau = 1}^{L} (\sigma  \eta )^{L - \tau} \right)  \frac{(1 + \delta_{\gamma}) \| \mathbf{x}_{{\Gamma^{k}} \backslash {\Gamma}^k_{{L}- 1}} \|_2^2}{\sigma}    \nonumber \\
  &\hspace{-5mm}\overset{(d)}{<}& \hspace{-4mm}  \left( \hspace{-.5mm} (1\hspace{-.5mm} - \hspace{-.5mm}\eta)\hspace{-.5mm} \sum_{\tau = L}^{\infty} (
  \sigma  \eta )^{\tau} \hspace{-1mm} + (1 \hspace{-.5mm} -\hspace{-.5mm} \eta)\hspace{-.5mm} \sum_{\tau = 0}^{L - 1} (
  \sigma  \eta )^{\tau} \hspace{-.75mm} \right) \hspace{-1mm} \frac{(1 \hspace{-.75mm} + \hspace{-.5mm} \delta_{\gamma}) \| \mathbf{x}_{{\Gamma^{k}} \backslash {\Gamma}^k_{{L}- 1}} \hspace{-.5mm} \|_2^2}{\sigma} \nonumber \\
  &\hspace{-5mm} = & \hspace{-4mm}   \frac{(1 + \delta_{\gamma}) (1 - \eta) \| \mathbf{x}_{{\Gamma^{k}} \backslash {\Gamma}^k_{{L}- 1}}
  \|_2^2}{\sigma(1 - \sigma \eta)},
\end{eqnarray}
where (a) is due to the RIP ($|{\Gamma^{k}} \backslash
{\Gamma}^k_\tau| \leq |\Gamma^k| = \gamma$ for $\tau = 1, \cdots,
L$), (b) is from Lemma \ref{lem:rleq} ($\|
\mathbf{r}^{k} \|_2^2 \leq (1 + \delta_{N^k})
\|\mathbf{x}_{\Gamma^{k}}\|_2^2 = (1 + \delta_{\gamma})
\|\mathbf{x}_{\Gamma^{k} \backslash {\Gamma}^{k}_0}\|_2^2$),
(c) follows from
\begin{equation}
  \| \mathbf{x}_{{\Gamma^{k}} \backslash {\Gamma}^k_\tau} \|_2^2 \leq \sigma ^{{L}- 1 - \tau} \|
\mathbf{x}_{{\Gamma^{k}} \backslash {\Gamma}^k_{{L}- 1}}
  \|_2^2,  \label{eq:next2} ~~~  \tau = 0, 1, \cdots, {L}, \nonumber
\end{equation}
which is a direct consequence of \eqref{eq:mu1}--\eqref{eq:mu4},
and (d) is because
 $\sigma
> 1$ and $\eta < 1$ so that $(\sigma  \eta )^{L} < \frac{1 - \eta}{1 - \sigma \eta} \cdot (\sigma
\eta)^{L} = (1 - \eta) \sum_{\tau = L}^{\infty} (\sigma
\eta)^{\tau}$  when $0 < \sigma \eta < 1$.
 
Finally, since $k_L \leq k + k'$ by \eqref{eq:64h}, and also noting that
the residual power of OMP is always non-increasing, we have
\begin{equation}
  \| \mathbf{r}^{k + k'} \|_2^2 \leq   \| \mathbf{r}^{k_{L}} \|_2^2
  \leq \frac{(1 + \delta_{\gamma}) (1 - \eta) \| \mathbf{x}_{{\Gamma^{k}} \backslash {\Gamma}^k_{{L}- 1}}
\|_2^2}{\sigma(1 - \sigma
\eta)} , \label{eq:43045} \nonumber
\end{equation}
which is the desired result.

%

\section{Proof of \eqref{eq:61h}} \label{app:d}

\begin{proof}
We prove \eqref{eq:61h} by mathematical induction on $L$. 
First, when $L = 1$, \eqref{eq:61h} becomes $
\left \lceil \frac{c}{4} \right \rceil \leq \lceil c \rceil - 1,$
which is simply true since $c \geq 2$.
Next, we assume that \eqref{eq:61h} holds up to an integer $\ell$ 
 so that $
 \sum_{\tau = 1}^\ell \left \lceil \frac{c}{4} (2^\tau - 1) \right \rceil \leq \left \lceil c 2^{\ell - 1} \right \rceil - 1$.
Then if $L = \ell + 1$,
\begin{eqnarray}
 \sum_{\tau = 1}^{\ell + 1} \left \lceil \frac{c}{4} (2^\tau - 1) \right \rceil 
&\leq&
   \left \lceil c 2^{\ell - 1} \right \rceil - 1 + \left \lceil \frac{c}{4} (2^{\ell + 1} - 1) \right \rceil
  \nonumber \\
  &\overset{(a)}{\leq}&
   \left \lceil c 2^{\ell - 1} \right \rceil + \left \lceil c 2^{\ell - 1} - \frac{1}{2} \right \rceil - 1
  \nonumber \\
  &\overset{(b)}{\leq}&
   \left \lceil c 2^{\ell} \right \rceil - 1,
\end{eqnarray} 
where (a) is because $c \geq 2$ and (b) uses Hermite's identity~\cite{patashnik1989concrete} ($\lceil ax \rceil = \lceil x \rceil + \lceil x - \frac{1}{a} \rceil \cdots + \lceil x - \frac{a - 1}{a} \rceil$ with $a = 2$ and $x = c 2^{\ell - 1}$), which completes the proof.
\end{proof}

\bibliographystyle{IEEEbib}
\bibliography{CS_refs}

\end{document}